\newtheorem{problem}{Problem}[section]
\newtheorem{remark}{Remark}[section]
\newif\ifsupp 
\begin{document}

\newcommand\relatedversion{}


\title{{\Large Ranking with submodular functions on the fly}\thanks{
This research is supported by the Academy of Finland projects MALSOME (343045), AIDA (317085) and MLDB (325117),
the ERC Advanced Grant REBOUND (834862), 
the EC H2020 RIA project SoBigData++ (871042), 
and the Wallenberg AI, Autonomous Systems and Software Program (WASP) 
funded by the Knut and Alice Wallenberg Foundation.
}}
\author{Guangyi Zhang\thanks{KTH Royal Institute of Technology. \href{mailto:guaz@kth.se}{guaz@kth.se}}
\and Nikolaj Tatti\thanks{HIIT, University of Helsinki. \href{mailto:nikolaj.tatti@helsinki.fi}{nikolaj.tatti@helsinki.fi}}
\and Aristides Gionis\thanks{KTH Royal Institute of Technology. \href{mailto:argioni@kth.se}{argioni@kth.se}}
}

\date{}

\maketitle


\fancyfoot[R]{\scriptsize{Copyright \textcopyright\ 2023 by SIAM\\
Unauthorized reproduction of this article is prohibited}}





\begin{abstract} \small\baselineskip=9pt 
Maximizing submodular functions have been studied extensively 
for a wide range of subset-selection problems.
However, much less attention has been given to the role of submodularity
in sequence-selection and ranking problems.
A recently-introduced framework, named \emph{maximum submodular ranking} (MSR), 
tackles a family of ranking problems that arise naturally
when resources are shared among multiple demands with different budgets.
For example, the MSR framework can be used to rank web pages for multiple user intents.
In this paper, we extend the MSR framework in the streaming setting. 
In particular, we consider two different streaming models
and we propose practical approximation algorithms.
In the first streaming model, called \emph{function arriving},
we assume that submodular functions (demands) arrive continuously in a stream, 
while in the second model, called \emph{item arriving}, 
we assume that items (resources) arrive continuously.
Furthermore, we study the MSR problem with additional constraints on the output sequence, 
such as a matroid constraint that can ensure fair exposure among items from different groups.
These extensions significantly broaden the range of problems that can be captured by the MSR framework.
On the practical side, we develop several novel applications based on the MSR formulation, 
and empirically evaluate the performance of the proposed~methods.
\end{abstract}

\section{Introduction}
\label{section:intro}

Submodular set functions capture a ``diminishing-returns'' property that 
is present in many real-world phenomena~\citep{krause2014submodular}.
Submodular functions are popular
as they admit a rich toolbox of optimization techniques developed in the literature.
Examples of submodular functions used in practical problems include 
document summarization \citep{lin2011class}, 
viral marketing in social networks \citep{kempe2015maximizing}, 
social welfare maximization \citep{vondrak2008optimal}, and many other.
The majority of existing submodularity-based problem formulations are restricted 
to selecting a \emph{subset of items}, and completely disregard the effect of \emph{item order}.
However, the order of items plays an important role in many applications.
In this paper, we investigate a versatile approach to ranking items within the submodularity framework.

Creating a sequence of resources to be shared among multiple demands 
appears in a broad range of applications. 
For example, when ranking web pages in response to a user query we want to cater for 
multiple user intents; 
when creating a live stream of music content shared among a group 
we want to satisfy the tastes of all listeners; 
and when selecting advertising for a screen on public display 
we want it to be relevant for all passengers. 
Typically, each demand has an individual maximum budget, 
e.g., in the previous scenario, the budget models the number of web pages that a user 
is expected to browse. 
The main challenge in these problems is to find a (partial) ranking of resources that best satisfies multiple demands with different budgets.

More concretely, the \emph{maximum submodular ranking} (\msr) formulation~\citep{zhang2022ranking}
deals with \emph{resources} and \emph{demands}.
A resource is referred to as an \emph{item} in a universe set \V.
Demands require resources, and the \emph{utility} of a demand for a set of resources 
is characterized by a \emph{non-decreasing submodular set function} $f: 2^\V \to \reals_+$.
The \emph{budget} of a demand is represented by a cardinality constraint, 
i.e., the maximum number of items its corresponding function is allowed to take.
The objective is to find a (partial) sequence of items to maximize the total utility, 
i.e., the sum of function values.
A formal problem definition is introduced in Section~\ref{section:definition}.

To capture a wider range of problems and increase the versatility of the framework, 
in this paper, we extend the \msr problem in two natural streaming models, 
which we name \emph{function arriving} and \emph{item arriving}.
In the \emph{function-arriving model}, 
we assume that demands arrive continuously in an online fashion 
and one is unaware of the type and/or volume of future demands.
This model is common in practice; for example, new audience may join a live stream in any time.
In the \emph{item-arriving model}, we assume that items arrive in a stream, 
and we have to output a sequence of items in \emph{one pass} and 
with limited memory after seeing all the items in the stream.
In other words, we need to process each item immediately after its arrival.
This model offers a way to handle the \msr problem when items are arriving continuously,
or are too many to be loaded into memory.

For both of the streaming models we consider we propose practical approximation algorithms.
Besides, we study the \msr problem with additional constraints on the output sequence, 
such as a matroid constraint (see Section~\ref{section:definition})  
that can ensure fair exposure among items from different groups.

On the practical side, we propose novel applications based on the \msr formulation.
We highlight here an application on \emph{progressively-diverse} personalized recommendation, 
while many other interesting ones are discussed in Section~\ref{section:experiment},
including live streaming and catalogued viral marketing.
A popular approach to personalized recommendation \citep{mitrovic2017streaming} 
is to select a succinct subset $S$ of items that maximizes a weighted sum 
(with a trade-off parameter\,\Ctrade) of two submodular terms,
relevance and diversity.
Note that for a given value of the parameter \Ctrade,
a subset $S$ presents a fixed trade-off between relevance and diversity. 
On the other hand, a user's need for diversity may be better served in an adaptive manner. 
For example, a target user may appreciate a recommended list having the 
\emph{most relevant items at the beginning} and becoming \emph{progressively diverse down the list}.
This requirement can be satisfied via an \msr formulation, 
by maximizing a weighted sum of multiple such functions, each with an increasing trade-off value \Ctrade.
See Section~\ref{section:definition} for a formal definition.

Our contributions in this paper are summarized as follows.
\begin{itemize}
	\item We study the {maximum submodular ranking} (\msr) problem
	in two different streaming models, 
	and devise approximation algorithms for each model.
	\item In the function-arriving streaming model, we show that a simple greedy algorithm yields 
	a 2 approximation for the \msr problem, if an item is allowed to be used multiple times.
	This approximation ratio is tight for the greedy algorithm.
	We also show that the problem is inapproximable if every item can be used at most once.
	\item We propose a novel reduction that maps the ranking problem into a constrained
	subset-selection problem subject to a bipartite-matching constraint.
	An immediate consequence is that there exist approximation algorithms for the \msr problem 
	subject to a general \nm-matroid constraint on the output sequence.
	Another consequence is that we can obtain efficient approximation algorithms 
	for the \msr problem in the item-arriving streaming~model.
	\item We apply the enhanced \msr framework to several novel real-life applications, and
	empirically evaluate the performance of the proposed algorithms.
\end{itemize}

The rest of the paper is organized as follows.
We discuss related work in Section~\ref{section:related}.
A formal problem definition is introduced in Section~\ref{section:definition}.
We present approximation algorithms for the function-arriving \msr problem and 
the item-arriving \msr problem in Sections~\ref{section:msrf} and~\ref{section:msri}, respectively.
Our empirical evaluation is conducted in Section~\ref{section:experiment}, 
followed by concluding remarks in Section~\ref{section:conclusion}.
Missing proofs and further experimental details are deferred to the supplementary materials.
Our implementation is made publicly available.\code

\section{Related work}
\label{section:related}

\para{Submodularity for sequences.}
The \msr problem was proposed by \citet{zhang2022ranking}, and 
it was later shown to be a special case of \emph{ordered submodularity}, 
introduced by \citet{kleinberg2022ordered}.
In particular, both formulations avoid an unnatural \emph{postfix monotonicity} property, 
which is required in prior formulations \citep{streeter2008online,zhang2012submodularity,alaei2021maximizing}.
Postfix monotonicity requires a non-decreasing function value after prepending an arbitrary item at the front of a sequence.
Additionally, the \msr problem can be seen as a dual problem to the \emph{submodular ranking} problem \citep{azar2011ranking}, which aims to minimize the total cover time of all functions in the absence of individual budgets.
No streaming extension has been known for the \msr problem.

\para{Bipartite matching.}
Bipartite matching and its many variants have been extensively studied 
in the literature~\citep{mehta2013online}.
The offline weighted bipartite matching can be solved exactly, 
e.g., via a maximum-flow formulation, while
the best-known approximation ratio for the online variant is achieved by \citet{fahrbach2020edge}.
It is known that many popular variants can be treated as a special case of the 
\emph{submodular social welfare} problem \citep{lehmann2006combinatorial}.
Similar to our reduction in Section~\ref{section:msri}, some assignment or scheduling problems can also be reduced to a subset-selection problem subjective to a bipartite-matching 
constraint~\citep{vondrak2008optimal,pinedo2012scheduling}.

\para{Constrained submodular maximization.}
In the offline setting, for a \nm-matroid constraint, it is well-known that a greedy algorithm guarantees a \nm-approximation for a modular function and a $(\nm+1)$-approximation for a submodular function \cite{korte1978analysis,fisher1978analysis}.
\citet{feldman2011improved} achieve the best-known $(\nm+\epsilon)$-approximation for submodular maximization under a \nm-exchange constraint, which is more general then a \nm-matroid.
\note{Is the $(\nm+\epsilon)$-approximation for submodular function?
How come the approximation for \nm-exchange, which is a more general case, 
is better than the approximation for \nm-matroid?

Guangyi: it is indeed better, with a local-search algorithm. 
That is why it is the best-known result. $\nm+1$-approx by greedy is well-known, but not tight in general.}
A lower bound of $\Omega(\nm/\ln \nm)$ is known for approximating \nm-dimensional matching, which is a special case of maximizing a modular function over a \nm-matroid \citep{hazan2006complexity}.

In regards to the streaming setting,
\citet{levin2021streaming} offer a $3+2\sqrt{2} \approx 5.828$ approximation subject to a matching constraint.
Under a \nm-matroid, a $4\nm$-approximation is obtained by \citet{chakrabarti2015submodular}, which is inspired by the modular variant in \citet{badanidiyuru2011buyback}.
In terms of lower bounds,
the analysis in \citet{badanidiyuru2011buyback} is shown to be optimal for any online algorithm (i.e., a streaming algorithm that maintains only a feasible solution at any moment).
Besides, a 2.692-approximation is impossible even subject to a bipartite matching constraint, and 
there exists evidence that the lower bound can be as high as 3-approximation~\citep{feldman2022submodular}.
For \nm-matroid constraint, a lower bound of $\nm$ has been proven for any streaming algorithm with sub-linear memory;
moreover, any logarithmic improvement over the best-known 4\nm-approximation requires memory super polynomial in \nm \citep{feldman2022submodular}.

\section{Problem definition}
\label{section:definition}

In this section we present the \emph{maximum submodular ranking} (\msr) problem 
in two different streaming models.
Afterwards, we introduce a formulation for progressively-diverse personalized recommendation.
Prior to that, we briefly review notions of submodularity and matroids.

\spara{Submodularity.}
Given a set \V, a function $f: 2^\V \to \reals_+$ is called \emph{submodular} if 
for any $X \subseteq Y \subseteq \V$ and $v \in \V \setminus Y$, it holds $f(v \mid Y) \le f(v \mid X)$,
where $f(v \mid Y) = f(Y+v) - f(Y)$ is the marginal gain of $v$ with respect to set $Y$.
A function $f$ is called \emph{modular} if $f(X) + f(Y) = f(X \cup Y)$ for any $X \cap Y = \emptyset$.
A function $f$ is called \emph{non-decreasing} if 
for any $X \subseteq Y \subseteq \V$, it holds $f(Y) \ge f(X)$.
Without loss of generality, we can assume that function $f$ is normalized, i.e., $f(\emptyset)=0$.

\spara{Matroid.}
For a set \V, a family of subsets $\M \subseteq 2^\V$ is called a \emph{matroid} if it satisfies the following two conditions:
(1)~downward closeness: if $X \subseteq Y$ and $Y \in \M$, then $X \in \M$;
(2)~augmentation: if $X,Y \in \M$ and $|X| < |Y|$, then $X + v \in \M$ for some $v \in Y \setminus X$.
Two useful special cases are those of \emph{uniform matroid} and \emph{partition matroid}.
The former is simply a \nS-cardinality constraint, i.e., 
$\M = \{S \subseteq \V: |S| \le \nS \}$,
and the latter consists of multiple cardinality constraints, 
each placed on a disjoint subset $G_\ell$ of $\V = \cup_\ell G_\ell$, i.e., 
$\M = \{S \subseteq \V: |S \cap G_\ell| \le \nS_\ell, \text{ for all } \ell \}$.
Given \nm matroids $\{\M_j\}_{j \in [\nm]}$, their intersection is called a \emph{\nm-matroid}.

We denote by $\seqs(\V)$ the set of all sequences formed by items in \V.
Given a sequence $\seq \in \seqs(\V)$, 
we write 
$\seq_i$ for the $i$-th item in \seq, and
$\seq + v$ for the new sequence obtained by appending item~$v$ to~\seq.
The length of a sequence \seq is denoted as $|\seq|$.
The set of items in \seq is denoted by $\V(\seq) \subseteq \V$.
A sequence $\seq$ being a subsequence of another sequence $\seq'$ is denoted by $\seq \preceq \seq'$.
Given an interval $w = [s, e]$, where $s, e$ are integers, we write $\seq[w] = \seq[s:e] = \{\seq_s, \ldots, \seq_e\}$.
More generally, given a subset of items $R \subseteq \V$, we write $\seq[R] = \{\seq_i \mid i \in R\}$.


We are now ready to define the \msr problem~\citep{zhang2022ranking} and its streaming variants.
\begin{problem}[Max-submodular ranking (\msr)]
\label{problem:msr}
Given a set \V of \nV items, 
a collection of \nF non-decreasing submodular functions $\fs = \{f_i\}_{i \in [\nF]}$, 
each associated with an integer $\nS_i$, 
the objective is to find a sequence solving
\begin{equation}
	\arg\max_{\seq \in \seqs(\V)} \sum_{f_i \in \fs} f_i(\seq[1 : \nS_i]).
	\label{eq:obj}
\end{equation}
\end{problem}

If an additional \nm-matroid constraint $\M \subseteq 2^\V$ is imposed on items in a feasible sequence $\seq$, i.e.,
$\V(\seq) \in \M$,
we refer to the problem as $\msr\nm$.
Such a \nm-matroid constraint is useful, for example, 
to avoid overrepresentation of some group of items 
in the returned sequence.
If the functions in \fs are modular, we refer to the problem as \emph{maximum modular ranking} (\mmr).

In the function-arriving streaming model, 
we observe a set of new functions $\fs_t$ at time step $t$, 
and the objective is to produce a sequence \seq in real time, that is, 
to decide irrevocably one item in \seq at each time step.
Note that items that are placed in previous item steps cannot be used anymore.
We assume that we observe the new functions $\fs_t$ before deciding the $t$-th item at step $t$, and
a function will stay active in subsequent steps after its arrival until it exhausts its budget.
It is also possible not to place any item at a step (by introducing dummy items in \V).
More formally, the \msr problem in the function-arriving model is defined as follows.

\note{
Just looking problem \ref{problem:msrf}	more carefully. 
What is the streaming/online nature of this problem exactly?
Isn't just solving MSR after each arrival?
We are not penalizing for past decisions since at time $t$
the objective is $f(\seq[t : \nS(f)])$, 
and all the functions that contribute to the objective are available at that point.

Guangyi: 
We actually penalize for past decisions by not allowing to use previously selected items.
This leads to a strong in-approx result.
When we allow an item to be re-used multiple times, it indeed can be interpreted like what you said, solving MSR after each arrival.
}

\begin{problem}[Function-arriving \msr (\msrf)]
\label{problem:msrf}
Given a set \V of \nV items, and 
a collection of non-decreasing submodular functions $\fs_t$ that arrive at the beginning of step $t$, 
with arrival time $\tim(f)=t$ and integers $\nS(f)$ for each $f \in \fs_t$, 
the objective is to find a sequence solving
\begin{equation}
	\arg\max_{\seq \in \seqs(\V)} 
	\sum_t \sum_{f \in \fs_{t}} f(\seq[t : \nS(f)]),
\end{equation}
by irrevocably deciding the $t$-th item $\seq_t$ at step~$t$.
\end{problem}

In contrast, in the item-arriving streaming model, 
we have full information about the functions that are used.
Actually, we further allow each function $f_i$ to ``reserve'' arbitrary $\nS_i$ slots $\slots_i \subseteq [\nV]$ in a sequence, 
instead of merely the first $\nS_i$ slots $[\nS_i]$ in \msr.
When given a sequence, function $f_i$ receives items only from slots $\slots_i$.
For example, when deciding showtimes in a cinema, a user ($f_i$) may only be available during weekends or at specific time of a day.
The goal of the item-arriving \msr problem is to produce a sequence \seq after processing all arriving items in one pass 
and using ``small'' memory size. 
In other words, items that are discarded from the memory cannot be used later.
If there is a slot in the sequence where no function is available, one is allowed to not place any item.
Formally, the \msr problem in the item-arriving streaming model is defined as follows.

\begin{problem}[Item-arriving \msr (\msri)]
\label{problem:msri}
Given a collection of non-decreasing submodular functions $\fs = \{f_i\}_{i \in [\nF]}$, each associated with $\nS_i$ available slots specified by $\slots_i \subseteq [\nV]$, and
items in \V that arrive in a stream,
the objective is to find a sequence in
\begin{equation}
	\arg\max_{\seq \in \seqs(\V)} \sum_{f_i \in \fs} f_i(\seq[\slots_i]).
	\label{eq:obj-msri}
\end{equation}
In addition, the number of items one can store at any moment depends only on $\{ \nS_i \}$ instead of \nV.
\end{problem}

In the offline setting where all items are in place, we call this variant
\emph{\msr with availability} (\msra) problem.
Note that when $\slots_i = [\nS_i] = \{1,\ldots,\nS_i\}$, 
the \msra problem becomes equivalent to the original \msr problem.
We also note that when there is a single function in \fs, 
\msri generalizes the problem of streaming submodular maximization 
for which no streaming algorithm with sublinear memory in \nV 
has an approximation ratio better than 2~\citep{feldman2020one}.

\spara{Progressively-diverse personalized recommendation.}
A popular approach to personalized recommendation \citep{mitrovic2017streaming} 
is to select a succinct subset $S$ of items that maximizes a submodular function of the form 
\begin{align}
f_{\Ctrade,\nS}(S) &= (1-\Ctrade) \sum_{v \in S} \text{rel}(v) + \frac{\Ctrade \nS}{|V|} \, \sum_{u \in \V} \max_{v \in S} \text{sim}(u,v), \nonumber\\
&~\text{ such that }~ |S| \le \nS.
\label{eq:recommend}
\end{align}
Here $\text{rel}(v)$ measures the relevance of an item $v$ to the target user, and
$\text{sim}(u,v)$ the similarity between two items $u$, $v$.
The second term represents one specific notion of diversity 
(also known as representative\-ness or global coverage), 
i.e., for every non-selected item $u \in \V$, 
there exists some item $v \in S$ that is similar enough to $u$.
To create a recommended list that adaptively serves a user's need for diversity, 
one could maximize a weighted sum of multiple functions $\{ f_{\Ctrade,\nS} \}$ 
with increasing trade-off value $\Ctrade \in [0,1]$ and cardinality $\nS$, 
so that the later suffix of the list will be dominated by functions with larger \Ctrade.

\section{Function-arriving \msr}
\label{section:msrf}
In this section we discuss two scenarios for the function-arriving \msr (\msrf) problem, 
depending on whether items in \V can be used at most once, or more than one time.
We show that the problem is inapproximable in the former case, and 
we present a 2-approximation algorithm for the latter case.

To start off our analysis, 
if the output sequence is constrained to not contain duplicate items, 
it can be shown that the \msrf problem is inapproximable, even when all functions are modular.
This result, stated below, follows from the inapproximability of the \emph{online-selection problem}, 
which aims to select the maximum of an adversarial sequence with no recall \citep{kesselheim2016secretary}. 

\begin{theorem}
\label{theorem:inapprox}
If items can be used at most once,
the \mmrf problem generalizes the online selection problem.
Thus, no randomized algorithm guarantees an $\smallo(\nV)$-approximation for the \mmrf problem.
\end{theorem}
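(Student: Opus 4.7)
The plan is to exhibit a simple reduction that casts the online selection problem as a special case of \mmrf (when item reuse is disallowed), from which the stated inapproximability follows immediately. Given an online-selection instance with $n$ adversarial values $v_1,\ldots,v_n$ arriving sequentially, I would build the following \mmrf instance: take $\V=\{a,d_1,d_2,\ldots,d_n\}$, where $a$ is a distinguished ``accept'' item and each $d_t$ is a dummy carrying no utility. At each step $t$ a single modular function $f_t$ arrives, with $\tim(f_t)=t$ and budget $\nS(f_t)=1$ so that it observes only $\seq_t$, defined by
\begin{equation*}
f_t(S) \;=\; v_t \cdot \mathbb{1}[a \in S] \quad \text{for every } S\subseteq \V.
\end{equation*}
Since $f_t$ is a scaled coordinate indicator on $a$, it is modular.

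Next, I would argue that any \mmrf algorithm $\cal A$ induces an online-selection algorithm with the same performance guarantee. At step $t$, algorithm $\cal A$ observes $v_t$ through the newly revealed $f_t$ and irrevocably places either $a$ (corresponding to ``accept $v_t$'') or some dummy $d_t$ (corresponding to ``reject''); the no-reuse restriction mirrors the no-recall constraint of online selection since $a$ can appear at most once. The \mmrf objective collapses to $v_{t^*}$ if $a$ is placed at step $t^*$ and to $0$ otherwise, precisely the value accepted by the corresponding online-selection algorithm. The offline optimum is $\max_t v_t$, attained by placing $a$ at the maximum step, so the two approximation ratios coincide.

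Finally, I would invoke the lower bound of \citet{kesselheim2016secretary}: no randomized algorithm is $\smallo(n)$-competitive on $n$ adversarially ordered values for the online selection problem. Because our construction uses $\nV = n+1 = \Theta(n)$, the same $\smallo(\nV)$ inapproximability transfers to \mmrf. The only delicate point is to ensure that the single-item-per-step, irrevocable structure of \mmrf truly mirrors online selection's accept/reject dynamics; choosing each $f_t$ to be supported on $\{a\}$ alone with budget $1$ makes this correspondence exact, and modularity (rather than general submodularity) is enough, yielding the stronger claim that even \mmrf is inapproximable.
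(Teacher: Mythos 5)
Your proposal is correct and follows essentially the same reduction as the paper: a single valuable item plus dummies, with an identical budget-one modular function arriving at each step whose value on the special item encodes $a_t$, so that placing that item corresponds exactly to accepting a number in the online-selection problem. The only cosmetic difference is that you use $n+1$ items where the paper uses exactly $\nV$ (one valuable item and $\nV-1$ dummies), which does not affect the $\smallo(\nV)$ inapproximability conclusion.
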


Given the inapproximability of \mmrf for the case that the output 
sequence should not contain duplicates, 
we proceed to study the problem when items in \V can be used multiple times.
This assumption is reasonable in many application, 
for example, a song can be added many times in a playlist --- 
and notice here that unnecessary duplicates are discouraged implicitly as their marginal gain is zero 
with respect to functions that have included these items already. 

\begin{algorithm2e}[t]
\DontPrintSemicolon
Initialize an empty sequence \seq\;
$\fs \gets \emptyset$\;
\For{$t = 1,\ldots$}{
	$\fs \gets \fs \cup \fs_t$ \tcp*{receive functions $\fs_t$}
	$A \gets \{ f \in \fs : \nS(f) \geq t \}$
	\tcp*{active functions at the $t$-th step}
	$v^* \gets \arg\max_{v \in \V} \sum_{f \in A} f(v \mid \seq[\tim(f) : t - 1])$\;
	$\seq \gets \seq + v^*$\;
}
\Return{\seq}\;
\caption{Greedy algorithm for Function-arriving \msr (\msrf)}
\label{alg:msrf-greedy}
\end{algorithm2e}

When duplicates are allowed in the output sequence, 
we prove that a simple greedy algorithm returns a solution 
with a 2-approximation guarantee. 
The greedy, which is displayed as Algorithm~\ref{alg:msrf-greedy},
selects the most beneficial item with respect to the current set of ``active'' functions 
at each step.
A function $f$ is called \emph{active} if it has not exhausted its item budget $\nS(f)$ 
up to that point.

\begin{theorem}
\label{theorem:online-functions}
If items in \V can be used multiple times in the output sequence,
Algorithm~\ref{alg:msrf-greedy} yields a 2-approximation solution for the \msrf problem.
\end{theorem}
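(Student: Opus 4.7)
The plan is a standard charging argument for greedy on submodular objectives, adapted to the per-function window $[\tim(f), \nS(f)]$. Let $\seq$ denote the greedy output and $\seq^*$ an optimal sequence. My first step is to rewrite the greedy objective as a double sum over time and active functions $A_t = \{f : \tim(f) \le t \le \nS(f)\}$, by telescoping each $f(\seq[\tim(f):\nS(f)])$ into step-by-step marginal gains $f(\seq_t \mid \seq[\tim(f):t{-}1])$. This is exactly the quantity that Algorithm~\ref{alg:msrf-greedy} maximizes at each step.

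Second, I would invoke the greedy rule at every step $t$: since duplicates are permitted, $\seq^*_t$ is itself a feasible candidate for $v^*$, so $\sum_{f \in A_t} f(\seq_t \mid \seq[\tim(f):t{-}1])$ is at least the analogous sum evaluated at $\seq^*_t$. Summing this inequality over $t$ and regrouping the double sum by function rather than by step, the greedy value is at least $\sum_f \sum_{t=\tim(f)}^{\nS(f)} f(\seq^*_t \mid \seq[\tim(f):t{-}1])$.

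Third, for each fixed $f$ I would enlarge the conditioning set from $\seq[\tim(f):t{-}1]$ to $\seq[\tim(f):\nS(f)] \cup \{\seq^*_s : \tim(f) \le s < t\}$. By submodularity this only shrinks the marginal gains, but now the sum over $t$ telescopes cleanly to $f(\seq[\tim(f):\nS(f)] \cup \seq^*[\tim(f):\nS(f)]) - f(\seq[\tim(f):\nS(f)])$, which by monotonicity is at least $f(\seq^*[\tim(f):\nS(f)]) - f(\seq[\tim(f):\nS(f)])$. Summing over $f$ yields $\text{greedy value} \ge \text{OPT} - \text{greedy value}$, which rearranges to the desired $2$-approximation.

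The main subtlety I expect is choosing the telescoping set so that it both dominates $\seq[\tim(f):t{-}1]$ (so that submodularity gives the inequality in the useful direction) and has controllable endpoints; the specific choice $\seq[\tim(f):\nS(f)] \cup \{\seq^*_s : s < t\}$ is engineered to achieve both simultaneously. The hypothesis that items may be reused is also essential in step two: without it, $\seq^*_t$ might already occur earlier in $\seq$ and therefore not qualify as a legal greedy candidate at step $t$, breaking the pointwise comparison. This is consistent with, and indeed explains, why Theorem~\ref{theorem:inapprox} rules out any non-trivial approximation in the no-duplicates regime.
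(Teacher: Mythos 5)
Your proof is correct and follows essentially the same route as the paper's: telescope the greedy value over active-function sets $A_t$, compare pointwise against $\seq^*_t$ via the greedy rule, regroup by function, and use submodularity plus monotonicity to conclude $\ALG \ge \OPT - \ALG$. The only cosmetic difference is that you merge the paper's two separate submodularity steps (enlarging the conditioning set to $\seq[f]$, then bounding the sum of singleton marginals by a single set marginal) into one telescoping argument over $\seq[f] \cup \seq^*[\tim(f):t-1]$, which is equivalent.
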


The approximation guarantee of the greedy can be shown to be tight, 
as the lower bound provided by \citet{zhang2022ranking} applies to our case, as well. 
In particular, 
since the \msrf problem generalizes the \msr problem, 
by letting all functions to arrive at the beginning, 
we obtain the following result.
\begin{remark}[\citet{zhang2022ranking}]
\label{remark:tight}
If an item can be used multiple times in the output sequence,
the 2-approximation solution obtained by Algorithm~\ref{alg:msrf-greedy} is tight for the \msrf problem.
\end{remark}

In the rest of this section, we prove Theorem~\ref{theorem:online-functions},
and the proof of Theorem~\ref{theorem:inapprox} is deferred to Section~\ref{section:theorem:inapprox}.

\begin{proof}[Proof of Theorem \ref{theorem:online-functions}]
We write $A_t=\{ f \in \bigcup_{t' \le t} \fs_{t'}: \nS(f) \geq t \}$ 
for the set of active functions at step $t$.
We denote by \seq the sequence produced by Algorithm~\ref{alg:msrf-greedy} with objective value \ALG, and
by $\seq^*$ the optimal sequence with objective value \OPT.

By the greedy selection criterion, 
we know that for any arbitrary item $v \in \V$, it holds that
\begin{equation}
    \sum_{f \in A_{t}} f(\seq_{t} \mid \seq[\tim(f): t-1] )
    \ge \sum_{f \in A_{t}} f(v \mid \seq[\tim(f): t-1]).
    \label{eq:msrf-greedy}
\end{equation}

To simplify the notation, let us write $\seq[f]$ to mean $\seq[\tim(f) : \nS(f)]$.

It is easy to see that \ALG is equal to the sum over $t$ of the left-hand side in Equation~(\ref{eq:msrf-greedy}), implying

\note{I do not see what the first inequality is by Equation~(\ref{eq:msrf-greedy}).
Equation~(\ref{eq:msrf-greedy}) does not involve $\seq^*_t$.

Guangyi: item $v$ in Equation~(\ref{eq:msrf-greedy}) can be arbitrary, so we replace it with each corresponding item in $\seq^*_t$.}
\begin{align*}
	\ALG 
	&= \sum_t \sum_{f \in A_{t}} f(\seq_{t} \mid \seq[\tim(f): t-1] ) \\
	&\stackrel{(a)}{\ge} \sum_t \sum_{f \in A_{t}} f(\seq^*_t \mid \seq[\tim(f): t-1] ) \\
	&= \sum_{f \in \fs} \sum_{t=\tim(f)}^{\nS(f)} f(\seq^*_t \mid \seq[\tim(f): t-1]) \\
	&\stackrel{(b)}{\ge} \sum_{f \in \fs} \sum_{t=\tim(f)}^{\nS(f)} f(\seq^*_t \mid \seq[f]) \\
	&\stackrel{(c)}{\ge} \sum_{f \in \fs} f(\seq^*[f] \mid \seq[f]) \\
	&= \sum_{f \in \fs} f(\seq[f] \cup \seq^*[f]) - f(\seq[f]) \\
	&\ge \sum_{f \in \fs}  f(\seq^*[f]) - f(\seq[f]) 
	= \OPT - \ALG,
\end{align*}
where 
inequality (a) is by Eq.~(\ref{eq:msrf-greedy}), and
inequalities (b) and (c) are due to submodularity.
\hfill\end{proof}

\section{Item-arriving \msr}
\label{section:msri}

\begin{table*}[t]
\caption{Summary of approximation ratios for \msra and \msri}
\label{tbl:reduction}
\centering
\begin{tabular}{lccccc}
\toprule
	&unconstrained	&\nm-matroid	\\
\midrule
\mmra	&exact	&$\nm+1$ (\citet{korte1978analysis})	\\
\msra	&$2+\Cgap$ (\citet{feldman2011improved})	&$\nm+1+\Cgap$ (\citet{feldman2011improved})	\\
\mmri	&$1/0.5086$ (\citet{fahrbach2020edge})	&$2(\nm+1+\sqrt{(\nm+1)\nm})-1$ (\citet{badanidiyuru2011buyback})	\\
\msri	&$5.828$ (\citet{levin2021streaming})	&4(\nm+1) (\citet{chakrabarti2015submodular})	\\
\bottomrule
\end{tabular}
\end{table*}

In this section, we present the reduction that turns the ranking problem into a subset selection problem subject to a bipartite matching constraint, and its rich consequences.
A summary of approximation ratios in different settings is displayed in Table \ref{tbl:reduction}, using algorithms provided in the citations.

\begin{theorem}
\label{theorem:MSRp}
For any integer $\nm \ge 1$, the $\msra\nm$ problem is an instance of maximizing a non-decreasing submodular function subject to a $(\nm+1)$-matroid.
\end{theorem}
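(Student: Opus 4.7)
The plan is to reduce \msra\nm to subset selection by taking as ground set the collection of all item--slot pairs $E = \V \times [\nV]$, with $(v, s)$ interpreted as ``item $v$ is placed in slot $s$''. A sequence \seq (possibly leaving slots empty) corresponds to $S_\seq = \{(\seq_s, s) : \text{slot } s \text{ is filled}\}$; conversely, any subset of $E$ with at most one pair per slot yields a partial sequence. I would lift the objective by defining, for $S \subseteq E$ and each $f_i$, the projection $\pi_i(S) = \{v \in \V : \exists s \in \slots_i \text{ with } (v, s) \in S\}$, and then $F(S) = \sum_i f_i(\pi_i(S))$; on subsets where each item appears at most once, this matches Eq.~(\ref{eq:obj-msri}).

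Next I would verify that $F$ is non-decreasing (since each $\pi_i$ is monotone in $S$ and each $f_i$ is non-decreasing) and submodular. For $X \subseteq Y \subseteq E$ and $(v, s) \notin Y$, the marginal decomposes as
\[ F((v,s) \mid X) = \sum_{i : s \in \slots_i} f_i(v \mid \pi_i(X)), \]
and analogously for $Y$. Since $\pi_i(X) \subseteq \pi_i(Y)$, submodularity of each $f_i$ yields $f_i(v \mid \pi_i(X)) \ge f_i(v \mid \pi_i(Y))$ term by term, so $F$ is submodular.

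On the constraint side, I plan to exhibit $\nm + 1$ matroids on $E$. The first is the partition matroid enforcing at most one pair per slot. For each matroid $\M_j$ appearing in $\M = \cap_j \M_j$, I would take the \emph{parallel extension} $\M_j'$ on $E$, treating $\{(v, s)\}_{s \in [\nV]}$ as parallel copies of $v$: a set $S$ is independent in $\M_j'$ iff each item appears in at most one pair of $S$ and the set of represented items lies in $\M_j$. Parallel extension is a standard matroid operation, so intersecting the slot partition matroid with all $\nm$ lifts yields a $(\nm + 1)$-matroid whose independent sets are precisely the item--slot matchings whose item support lies in $\M$; these are in bijection with the feasible sequences of $\msra\nm$, and the objectives coincide under this bijection.

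The main subtlety I anticipate is precisely this matroid lift of $\M$. The naive rule ``declare $S$ feasible iff the set of items in $S$ lies in $\M_j$'' is not a matroid in general: for a rank-one uniform $\M_j$ on $\{a, b\}$, the sets $X = \{(a, 1)\}$ and $Y = \{(b, 1), (b, 2)\}$ both have their item support in $\M_j$, yet augmenting $X$ by either $(b, 1)$ or $(b, 2)$ gives item support $\{a, b\} \notin \M_j$, violating the exchange axiom. The parallel-copies construction---under which any two copies of the same item form a size-two circuit---repairs this by enforcing item-uniqueness together with membership in $\M_j$ simultaneously, at the cost of encoding item-uniqueness redundantly across the $\nm$ lifts. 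Once this is verified, the reduction is tight: any independent $S$ in the $(\nm + 1)$-matroid decodes into a valid partial sequence with the same objective, and vice versa, completing the proof.
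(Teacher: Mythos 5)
Your proposal is correct and follows essentially the same route as the paper: the ground set of item--rank pairs, the lifted objective with its marginal-gain decomposition for submodularity, a partition matroid on slots, and for each $\M_j$ a lift that simultaneously enforces item-uniqueness and membership of the item support in $\M_j$ (the paper's $\M_i' = \{S' : \V(S') \in \M_i\} \cap \mathcal{B}$, proved to be a matroid in its Lemma~\ref{lemma:AB}). Your observation that the naive lift without item-uniqueness violates the exchange axiom is exactly the subtlety the paper's construction is designed to avoid.
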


As an immediate consequence of Theorem \ref{theorem:MSRp},
the item-arriving \msr (\msri) can be solved by streaming algorithms for constrained submodular maximization.
\begin{corollary}
\label{corollary:MSR-V}
For any integer $\nm \ge 1$, the $\msri\nm$ problem is an instance of maximizing a non-decreasing submodular function subject to a $(\nm+1)$-matroid in one pass while remembering $\bigO(\sum_i \nS_i)$ items at any moment.
\end{corollary}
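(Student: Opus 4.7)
The plan is to lift the offline reduction of Theorem~\ref{theorem:MSRp} to the streaming setting and then invoke a known streaming algorithm for matroid-constrained submodular maximization. First, I would apply Theorem~\ref{theorem:MSRp} to recast $\msri\nm$ as the problem of maximizing a non-decreasing submodular function $F$ over a $(\nm+1)$-matroid on a ground set $E$ of item-slot pairs $(v,s)$, where $v \in \V$ and $s$ ranges over the slots available to some function that could receive $v$.

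Second, I would verify that the reduction is stream-friendly. The arrival of item $v$ in the original input stream corresponds exactly to the simultaneous arrival, in the reduced instance, of the bundle $\{(v,s) : s \in \bigcup_i \slots_i\}$, which has size bounded independently of $\nV$. Hence the original stream induces on the fly a valid one-pass stream over $E$ that can be fed to any standard streaming algorithm for monotone submodular maximization subject to a $(\nm+1)$-matroid, for example the algorithm of \citet{chakrabarti2015submodular}.

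Third, I would establish the memory bound. The bipartite-matching factor of the $(\nm+1)$-matroid has rank at most the total number of available slots, which is $\sum_i \nS_i$, so every feasible solution in $E$ has at most $\sum_i \nS_i$ elements. Streaming algorithms for matroid-constrained submodular maximization store only a feasible solution (plus $O(1)$ auxiliary state) at any time, so the memory footprint is $O(\sum_i \nS_i)$ pairs, each referencing a single item from $\V$, giving the claimed bound measured in items.

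The step I anticipate as the main obstacle is verifying that the marginal-gain oracle for $F$ can be answered using only items currently held in memory together with the arriving item, without recourse to items already discarded. This should follow from the local structure of the ranking objective---the marginal contribution of inserting $(v,s)$ only depends on items previously assigned to slots preceding $s$ among the functions whose availability contains $s$, and those items are precisely the ones retained as part of the current feasible solution---but the encoding of $F$ produced by the reduction must be checked to preserve this locality so that no extra bookkeeping inflates the memory above $O(\sum_i \nS_i)$.
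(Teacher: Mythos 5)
Your proposal is correct and follows essentially the same route as the paper, which states this corollary as an immediate consequence of Theorem~\ref{theorem:MSRp}: restrict the extended ground set to the relevant slots $\bigcup_i \slots_i$, observe that each arriving item $v$ induces a bundle of at most $\sum_i \nS_i$ pairs, and feed the resulting stream to a standard streaming algorithm (e.g., that of \citet{chakrabarti2015submodular}), whose memory is bounded by the rank of the slot matroid, i.e., $\bigO(\sum_i \nS_i)$. Your closing concern about oracle locality is resolved exactly as you suspect, since the objective $g(S)=\sum_i f_i(\V(S\cap \slots_i'))$ depends only on the currently retained feasible set, though note the marginal gain of $(v,s)$ depends on all items held in slots of $\slots_i$ for each $f_i$ with $s\in\slots_i$, not only those in preceding slots.
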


Moreover, when functions are modular, we can obtain a stronger result.
\begin{corollary}
\label{corollary:MMR}
For any integer $\nm \ge 1$, the $\mmra\nm$ problem is an instance of maximizing a modular function subject to a $(\nm+1)$-matroid.
In particular, the $\mmra$ problem is an instance of maximum weighted bipartite matching.
\end{corollary}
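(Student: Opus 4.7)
The plan is to invoke Theorem~\ref{theorem:MSRp} and track what happens when the input functions are modular. Theorem~\ref{theorem:MSRp} represents $\msra\nm$ as the maximization of a non-decreasing submodular function $F$ on the ground set of item--slot pairs $\V \times [\nV]$ subject to a $(\nm+1)$-matroid, which is the intersection of a partition matroid enforcing ``at most one item per slot'' together with a lifting of the original $\nm$-matroid on items to pairs.

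First, I would verify that the reduction preserves modularity. For modular $f_i$, the identity $f_i(\seq[\slots_i]) = \sum_{p \in \slots_i} f_i(\{\seq_p\})$ lets me re-index the total utility by pairs, giving $F(S) = \sum_{(v,p) \in S} w_{v,p}$ with weights $w_{v,p} = \sum_{i :\, p \in \slots_i} f_i(\{v\})$. Each pair contributes a fixed weight independent of the rest of $S$, so $F$ is modular. This establishes the first assertion: $\mmra\nm$ is an instance of maximizing a modular function over a $(\nm+1)$-matroid.

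Second, I would specialize to unconstrained $\mmra$. In this base case the $(\nm+1)$-matroid becomes the intersection of the partition matroid ``at most one item per slot'' with the partition matroid ``at most one slot per item''; the latter is what the lifted item-matroid reduces to when the original item constraint is trivial, and it is without loss of generality for a modular set objective, since duplicated items cannot increase the value of $F$. Together these two partition matroids describe exactly the independence system of bipartite matchings on the bipartite graph with sides $\V$ and $[\nV]$. Combined with the modular edge weights $w_{v,p}$, the resulting problem is precisely maximum weighted bipartite matching, which is solvable exactly in polynomial time.

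The only delicate bookkeeping step is confirming that the lifted item-matroid in the unconstrained case really collapses to the ``one slot per item'' partition matroid, as opposed to allowing a single item to populate several slots; once the \emph{no-duplicate} simplification above is invoked, this step is routine, and the rest follows mechanically from Theorem~\ref{theorem:MSRp} and its pair-based encoding.
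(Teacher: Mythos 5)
Your argument is correct and matches the paper's (implicit) proof: the corollary is obtained exactly by running the reduction of Theorem~\ref{theorem:MSRp} and observing that modularity of the $f_i$ makes the lifted objective modular with weights $w_{v,t}=\sum_{i:\,t\in\slots_i}f_i(\{v\})$, while in the unconstrained case the constraint collapses to $\mathcal{B}\cap\mathcal{C}$, i.e., a bipartite matching between items and ranks. One small bookkeeping note: in the paper the ``one slot per item'' constraint $\mathcal{B}$ is an explicit, always-present part of the construction rather than something derived from the lifted item-matroid, which disposes of the ``delicate step'' you flag (and also guarantees the pairwise weight formula agrees with the objective on all feasible sets, since the projection $\V(\cdot)$ never has to deduplicate there).
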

\note{Modular maximization over 3-matroid generalizes \np-hard 3-dimensional matching.}


\begin{proof}[Proof of Theorem \ref{theorem:MSRp}]
The main idea of the reduction is to create an extended universe set $\V'$, i.e.,
\begin{equation}
\label{eq:extended-V}
	\V' = \{ (v,t) : v \in \V, \, t \in [\nV] \},
\end{equation}
which can be seen as the edges in a complete bipartite between items $L = \V$ and ranks $R = [\nV]$.
Let us define $\V(S') = \{ v : (v,t) \in S' \}$ to be the projection of $S'$ onto \V.
Let us also write $X_v = \{(v, t) : t \in [\nV]\}$ and $Y_t = \{(v, t) : v \in \V\}$.

Suppose we are given a \nm-matroid $\M \subseteq 2^\V$ over \V.
Define $\M' = \mathcal{A} \cap \mathcal{B} \cap \mathcal{C}$, where
\[
\begin{split}
	\mathcal{A} & = \{ S' \subseteq \V': \V(S') \in \M \}, \\
	\mathcal{B} & = \{ S' \subseteq \V': | S' \cap X_v | \le 1, \text{ for all } v  \in \V  \}, \\
	\mathcal{C} & = \{ S' \subseteq \V': | S' \cap Y_t | \le 1, \text{ for all } t  \in [\nV] \}. \\
\end{split}
\]
Here, $\mathcal{B}$ forces that an item appears only once and $\mathcal{C}$ forces that only one item appears at time $t$.
Consequently, a feasible sequence \seq can be written as a subset of $\V'$ that satisfy $\M'$.

On the other hand, a feasible subset $S' \subseteq \V'$ can be transformed to a sequence by ordering items in $\V(S')$ according to their associated ranks in $S'$.
If $S'$ consists of non-consecutive ranks, one can insert dummy items (or shifting items forward in \msr) to obtain a sequence, with no decrease in the objective function.

We claim that $\M'$ is a $(\nm+1)$-matroid. We will prove the claim
by arguing that $\mathcal{A} \cap \mathcal{B}$ is a $\nm$-matroid 
(by Lemma~\ref{lemma:AB} in Section~\ref{section:theorem:MSRp}). 
Then $\mathcal{A} \cap \mathcal{B} \cap \mathcal{C}$ is a $(\nm+1)$-matroid because $\mathcal{C}$ is a matroid.

What is left is to show that the objective function for \msra (Equation~\ref{eq:obj-msri}) is non-decreasing and submodular with respect to the new universe set $\V'$.
Showing non-decreasing is obvious, so we only elaborate on submodularity.
Recall that $\slots_i \subset [\nV]$ consists of available slots for function $f_i$.
Let us define $\slots'_i = \{ (v,t) : v \in \V, \, t \in \slots_i \}$.
The objective function can be now written as
	$g(S) = \sum_{f_i \in \fs} f_i(\V(S \cap \slots_i'))$.
For any subset $S \subseteq W \subseteq \V'$, 
the marginal gain $g(\cdot \mid S)$ of including an item $(v,t)$ into $S$ is
\begin{align*}
	&g((v,t) \mid S) = \sum_{f_i \in \fs: t \in \slots_i} f_i(v \mid \V(S \cap \slots_i')) \\
	&\ge \sum_{f_i \in \fs: t \in \slots_i} f_i(v \mid \V(W \cap \slots_i')) 
	= g((v,t) \mid W),
\end{align*}
since $\V(S \cap \slots_i') \subseteq \V(W \cap \slots_i')$ and submodularity of each $f_i$.
Hence, the $\msra\nm$ problem can be cast as an instance of non-decreasing submodular maximization under a $(\nm+1)$-matroid.
\hfill\end{proof}


\section{Experiments}
\label{section:experiment}

We evaluate our methods on novel use cases
that motivate the \msrf and \msri problems.
For each use case,
we simulate a concrete task using real-life data, and 
empirically evaluate the performance of the proposed algorithms.
A summary of the datasets can be found in Table~\ref{tbl:datasets}.
An examination on the running time is deferred to Section~\ref{section:runtime}.
Our implementation has been made publicly available.\code

\begin{table}[t]
  \caption{Datasets statistics}
  \label{tbl:datasets}
  \centering
\begin{tabular}{lrr}
\toprule
Dataset & $\nV = |\V|$ 
        & $\nF = |\fs|$ \\
\midrule
Music \cite{Bertin-Mahieux2011}	&61 415	&10 000	\\
Github social network \cite{rozemberczki2019multiscale}	&37 700	&100	\\
Sogou web pages \cite{liu2011users}	&725	&1 017	\\
Twitter words \cite{pennington2014glove}	&10 000	&8	\\
\bottomrule
\end{tabular}
\end{table}

\begin{figure}[t]
    \centering
    \subcaptionbox{Music live streaming}{\includegraphics[width=.4\textwidth]{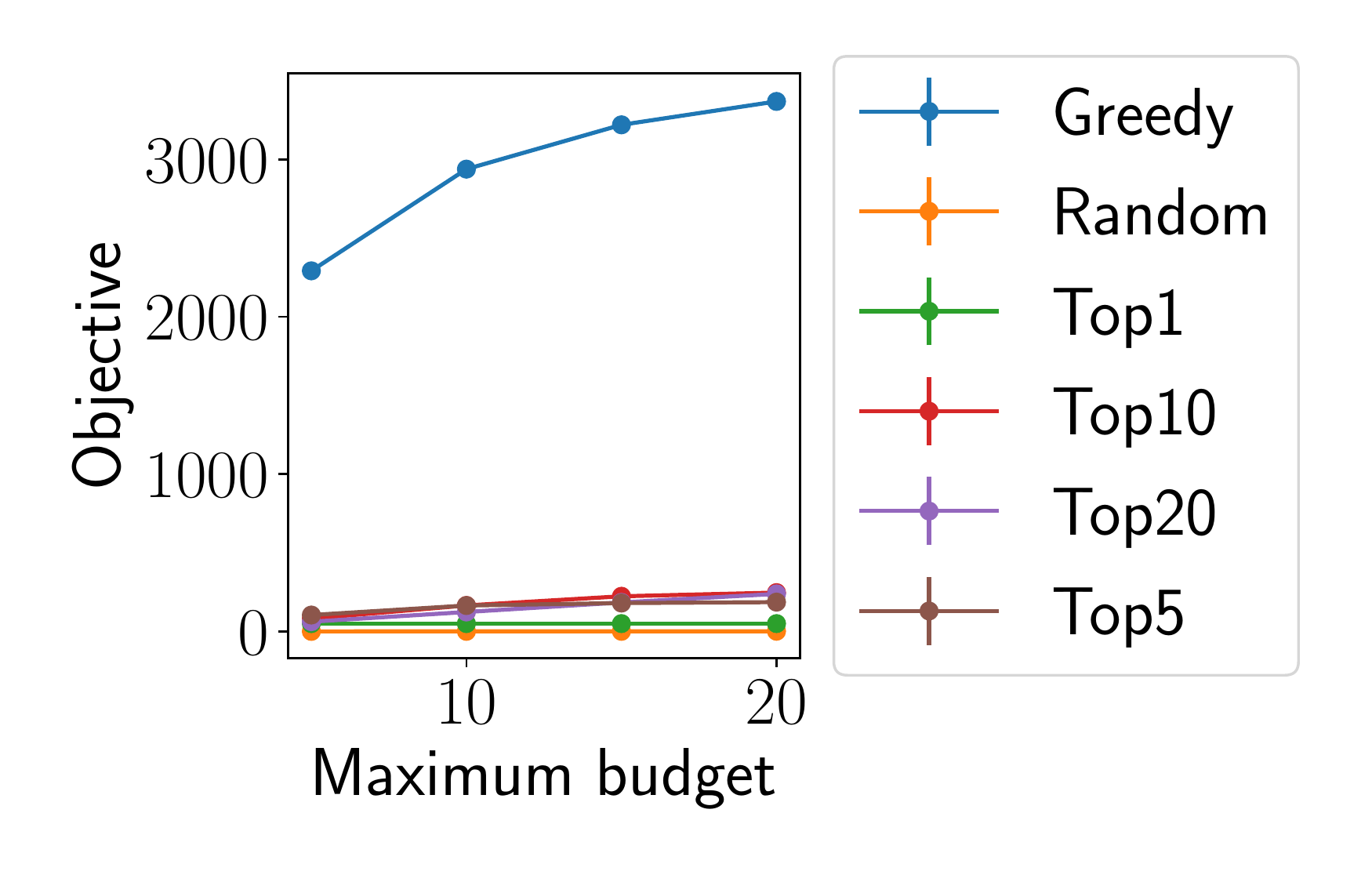}}
    \quad
    \subcaptionbox{Catalogued viral marketing in Github}{\includegraphics[width=.4\textwidth]{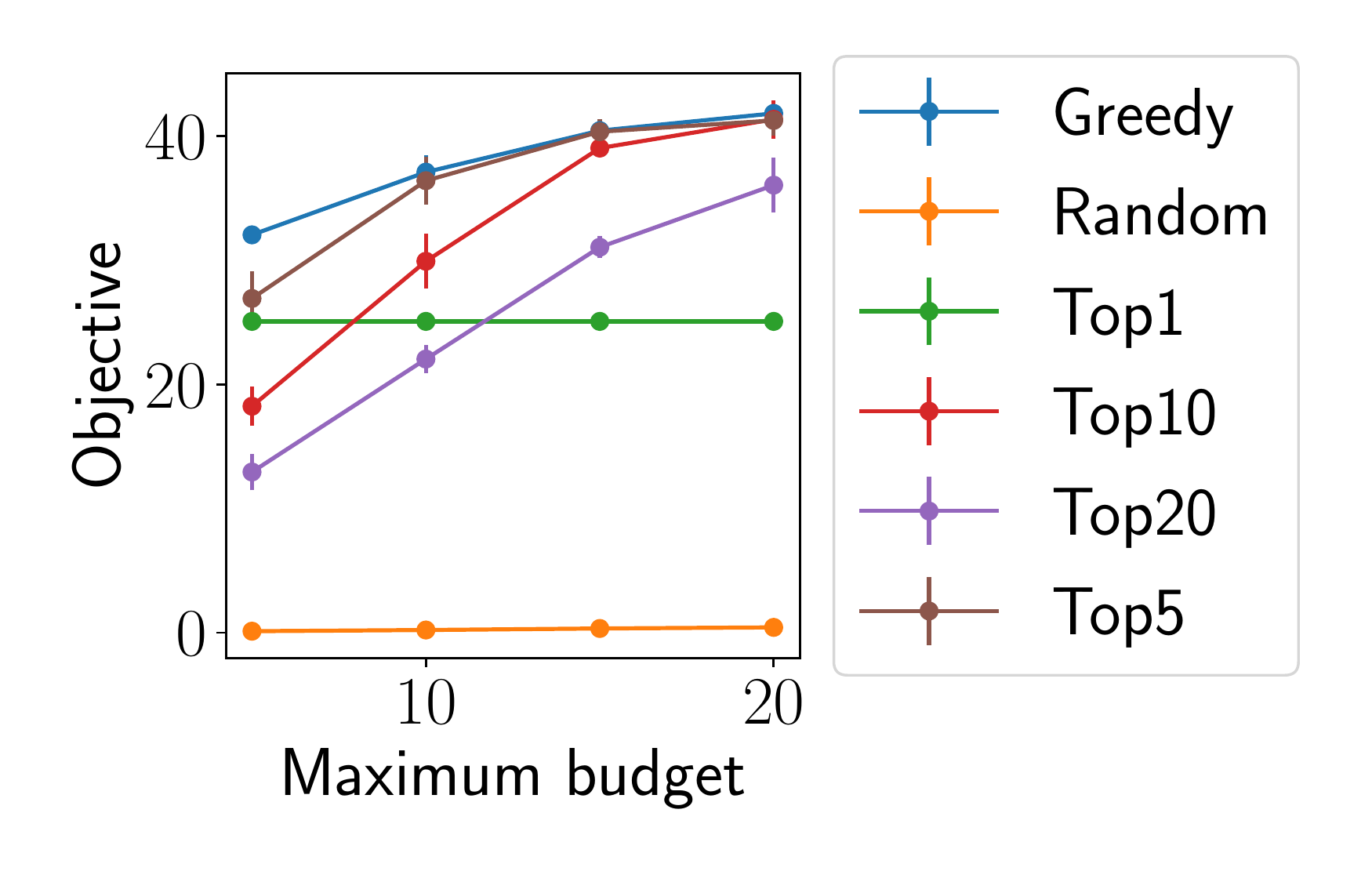}}
    \caption{\msrf. 
    Each demand is given a random arrival time, and a random budget between 1 to a parameter of maximum budget.
    (a) Utility of a user demand is $f(S) = |S \cap S_{\text{like}}| / |S_{\text{like}}|$, where $S_{\text{like}}$ is the set of songs the user likes.
    (b) Utility of a marketing demand is $f(S) = |N(S) \cap S_{\text{g}}| / |S_{\text{g}}|$, 
    where $N(S)$ is neighborhood of $S$ and $S_{\text{g}}$ is the target group of nodes in the network.
    }
    \label{fig:msrf}
\end{figure}

\subsection{Function-arriving \msr}
We present two use cases for the \msrf problem, 
live streaming and catalogued viral marketing, 
and we evaluate our methods on relevant datasets.

\para{Algorithms.}
The proposed greedy algorithm in Algorithm~\ref{alg:msrf-greedy} is termed \greedy.
Other baselines include 
\random, which picks a random item at every step, and
\looptopk, which selects the top-$k$ items repetitively --- 
in our use case this corresponds to playing the most popular songs in a loop.
Note that \looptopk is \emph{omniscient} as it requires information about the item popularity in advance.

\para{Live streaming.}
One increasingly popular application on the internet is live streaming, 
where a live streamer performs various shows continuously, 
while the audience may join or leave any time.
More concretely, we consider music live streaming, 
where the live streamer plays songs continuously.
To simulate this application, we use the Million Song Dataset \cite{Bertin-Mahieux2011}, 
consisting of triples representing a \emph{user}, \emph{song}, and \emph{play count}.
We assume that a user \emph{likes} a song if it is played more than once.
We define \emph{user utility} to be fractional coverage of the liked songs, 
which is a submodular function.
We set a random budget for each user between 1 to a maximum-budget parameter, 
i.e., how many songs the user will listen to.
Besides, every user is given a random arrival time over a long horizon.
Our goal is decide a sequence of songs in real time that maximizes total user utility.

\para{Catalogued viral marketing.}
For viral-marketing applications in social networks,
the goal is to identify a small set of seed nodes who can influence many other users.
For popular diffusion models, the number of influenced nodes is a submodular function
of the seed set \cite{kempe2015maximizing}.
Here, 
we introduce a generalization of this classic result
to cope with multiple marketing demands simultaneously,
where each demand is only interested in reaching a specific group of users.
For example, an advertiser may want to influence female users, 
while another may want to influence users near a specific city.
Each demand provides a number of product samples to seed nodes, 
with the hope to advertise their products by word of mouth.

As a use case for our experimental evaluation, 
we consider a platform designed to help with these marketing demands, and 
assume that a package of samples of different products
can be sent to one seed node at each time step.
As the marketing demands arrive in real time, 
we aim to catalogue unfinished demands by identifying a seed node 
that is beneficial to all of them.
Note that samples from one demand should be distributed as soon as possible after its arrival
to the outgoing packages.
To simulate this use case, we use the GitHub social network~\cite{rozemberczki2019multiscale}.
We consider 100 demands, each targeting a random subset of the network, 
together with a random number of samples from 1 to a maximum-budget parameter, 
and a random arrival time.
Our goal is to maximize the total utility of demands by sending a package 
to one carefully chosen seed node at each time step.

\para{Results.}
The result of the simulation is shown in Figure~\ref{fig:msrf},
in which each data point represents an average of three runs, 
each with a different random seed.
For the music-streaming task (Figure~\ref{fig:msrf}(a)), 
the \greedy algorithm outperforms all other baselines by a large margin.
This suggests that for users with diverse preferences in songs, 
an algorithm like \greedy, which can adapt to the need of current active users, 
is required for good performance.
In the catalogued viral-marketing task (Figure~\ref{fig:msrf}(b)), 
the \greedy algorithm continues to achieve the best performance.
However, several baselines from \looptopk come closer to \greedy as the demand budget increases.
This signifies the existence of a group of influencers 
who can collectively reach the most users in the Github network.

\begin{figure}[t]
    \centering
    \subcaptionbox{Web page ranking in Sogou}{\includegraphics[width=.4\textwidth]{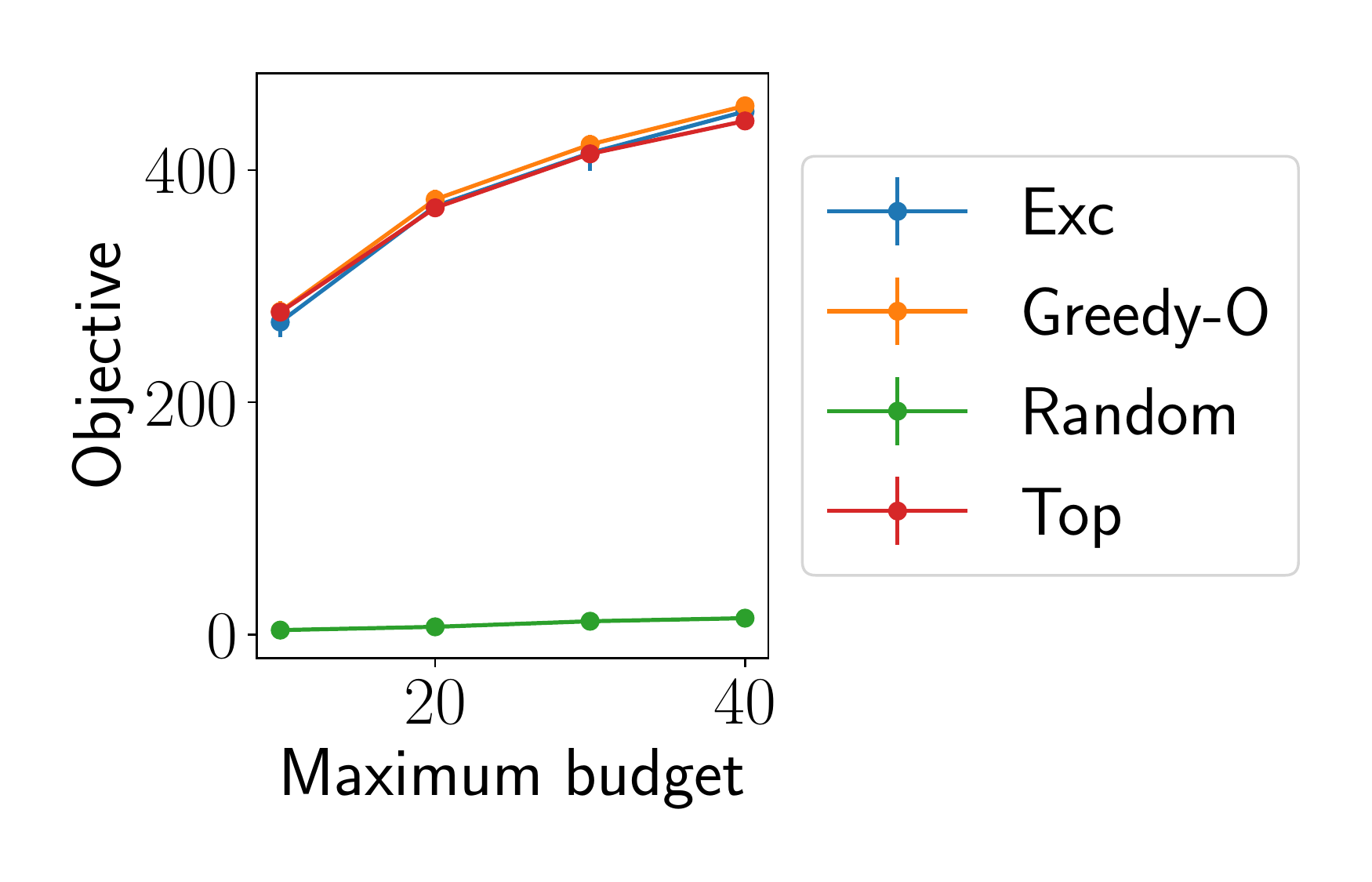}}
    \subcaptionbox{Finding synonyms to ``Trump'' in Twitter}{\includegraphics[width=.5\textwidth, trim = 1cm 0cm 0.9cm 0cm, clip]{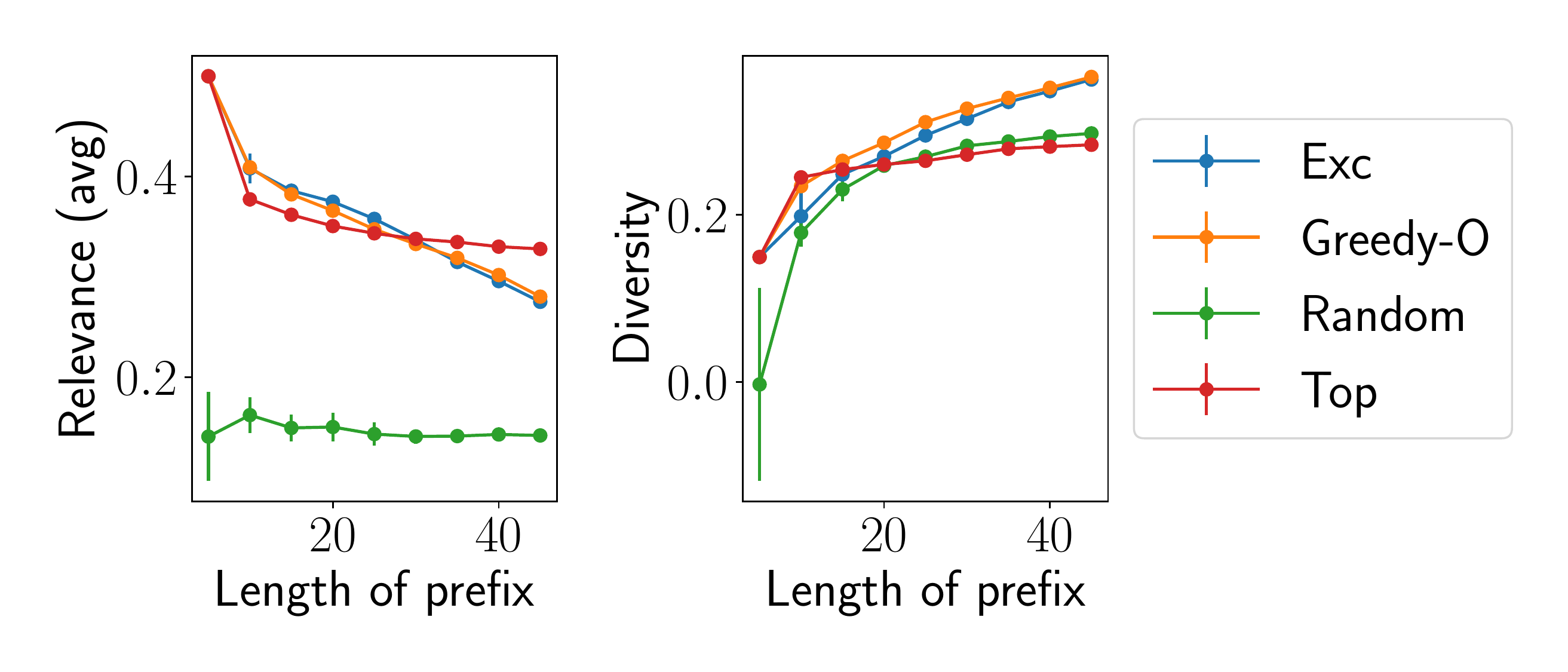}}
    \caption{\msri.
    (a) Utility of a user intent is $f(S) = |S \cap S_{\text{rel}}| / |S_{\text{rel}}|$, where $S_{\text{rel}}$ is the set of relevant pages.
    Each intent is given a random budget between 1 to a parameter of maximum budget.
    (b) Given a sequence, its relevance and diversity (Equation~\ref{eq:recommend}) is measured at every prefix.
    }\label{fig:msri}
\end{figure}

\subsection{Item-arriving \msri.}
Next we will present experiments for \msri as well as for progressively-diverse personalized recommendation.

\para{Algorithms.}
We adopt the state-of-the-art streaming algorithm in \citet{chakrabarti2015submodular} as our algorithm,
which greedily assigns each arriving item to one of ranks in the sequence whenever possible, 
starting from rank 1.
We call this algorithm Exchange (\exc).
If a rank is occupied by some previous item, 
\exc replaces it if the current item is twice more valuable than the existing item.
Other baselines include 
\random, which produces a random sequence, and
\topk, which orders items by non-increasing singleton utility.
We also include an offline baseline, Omniscient Greedy (\greedyo) \cite{zhang2022ranking}, which serves as a tighter estimate for the optimal value.
\greedyo sequentially selects greedily an item with respect to the current set of active functions.

\para{Multiple intents re-ranking.}
In the absence of explicit user intent for a given query,
a search engine needs to take into account all possible intents when providing a list of returned web pages.
Each intent is only relevant to a subset of web pages.
We represent the utility of an intent by the fractional coverage 
of relevant pages browsed before running out of patience.
The goal is to produce a list of web pages to maximize the utility over all intents.

In this use case, we extract user intents from the SogouQ click log dataset \cite{liu2011users}.
We consider all queries related to ``movie,'' and 
for each such query
we collect all users who issued the query.
We also collect the pages they clicked.
We treat each user as an intent, and pages they clicked as the set of relevant pages.
For each intent, we generate a random number from 1 to a maximum-budget parameter as 
user ``patience,'' i.e., the number of pages that the user will browse. 

\para{Progressively-diverse personalized recommendation.}
Next we consider the recommendation task introduced in Section~\ref{section:definition}.
To simulate a concrete task, we consider the task of finding representative synonyms with respect to a given keyword.
We choose ``Trump'' as our keyword.
We use the pre-trained word embedding Glove over the Twitter corpus \cite{pennington2014glove}.
Similarity or relevance between any two words is measured by the cosine similarity minus 0.5 due to dense vectors.
The top 10 000 relevant words form our candidate set \V, among which 100 random words are chosen as bases to compute the diversity term (Equation~\ref{eq:recommend}).
Given a maximum length ($\nS=40$) of the recommended list,
we create multiple functions $\{ f_i \}$ (Equation~\ref{eq:recommend}) for $i \le 8$,
where the $i$-th function is associated with a weight $(1/2)^i$, a trade-off value $\Ctrade=(i-1)/\nS$ and a budget of $5i$.
Thus, function~$f_i$ is dominant for the $i$-th length-5 subsequence, and $f_i$ with a large $i$ favors increasingly diverse items.

\para{Results.}
The results are shown in Figure~\ref{fig:msri}.
Every data point represents an average of three runs, each with a different random seed.
For the task of web page ranking (Figure~\ref{fig:msri}(a)), all algorithms except for \random perform almost equally well.
This implies a heavy overlap in relevant pages among different user intents.
For the task of finding diverse synonyms (Figure~\ref{fig:msri}(b)), 
the ranking in terms of the objective is 
$\greedyo \approx \exc > \topk > \random$ 
($7.139 \approx 7.136 > 6.652 > 3.527$).
We demonstrate two components of the objective, relevance and diversity, separately in Figure~\ref{fig:msri}(b).
Note that the \random algorithm is a classic method in finding diverse representatives, while
the \topk algorithm is optimal if the objective degenerates into a single modular term of relevance.
The word list returned by the \exc algorithm is indeed increasingly the most diverse, while it also finds relevant synonyms at the beginning.
The actual word list returned by \exc is presented in Section~\ref{section:synonyms}.

\section{Conclusions}
\label{section:conclusion}

In this paper, we study extensions of the \msr problem in two streaming models.
In the first demand-arriving model,
we show that a greedy algorithm guarantees 2-approximation, if items can be reused.
In the second item-arriving model,
we discover a reduction that turns the ranking problem into a constrained subset-selection problem, and
inherit approximation guarantees from standard submodular maximization.
The reduction further allows us to approximate the \msr problem subjective to additional \nm-matroid constraints.
Finally, we describe several novel applications for the \msr problem, and examine empirical performance of the proposed algorithms.

One limitation of the \msr formulation is that individual budgets are not always known in some applications.
Another limitation is that it may be computationally costly if both the number of demands and items are large, especially for the item-arriving \msr problem.

With respect to ethical considerations of the work, our algorithm is a general submodularity-based framework for ranking,
which is not dedicated to a specific application. 
We cannot identify strong negative societal concerns. 
Many of the broader machine-learning issues, such as misuse of technology, biases in data, effects of automation in the society, and so on, are relevant to this work, as well, but in no greater degree than the whole machine-learning field.

\bibliographystyle{abbrvnat}
\bibliography{references}

\ifsupp
\clearpage
\appendix
\section{Appendix}

\subsection{Proof of Theorem \ref{theorem:inapprox}}
\label{section:theorem:inapprox}
\begin{proof}[Proof of Theorem \ref{theorem:inapprox}]
For the online-selection problem,
the input is a sequence of $\nV$ integer numbers $a_1,\ldots,a_\nV$
that are revealed one after another. 
The problem asks to select exactly one number immediately after it is revealed, 
and the goal is to maximize the value of the selected number.
It is well-known that this problem does not admit a $\smallo(\nV)$ approximation \citep{kesselheim2016secretary}.
\note{For a proof, see \url{https://www.mpi-inf.mpg.de/fileadmin/inf/d1/teaching/summer16/random/yaosprinciple.pdf}.}

We observe that the online-selection problem is a special case of \mmrf.
To see this, consider the following instance:
let the universe set \V consisting of an item $v$ and $\nV-1$ other dummy items.
Every arriving modular function $f$ has an identical form, with $f(v)=a$ and $f(u)=0$ for any item $u \ne v$.
Besides, every function $f$ is associated with a budget of $\nS(f)=1$.
At the $t$-th step, a function as defined above with $f(v)=a_t$ arrives.
It is easy to see that the optimal objective value of \mmrf is equal to the largest number $a_t$.
Then, deciding the rank of item $v$ in the output sequence for the \mmrf problem
is equivalent to deciding which number in $\{a_t\}$ to select for the online-selection problem.
Hence, \mmrf generalizes the online-selection problem.
\hfill\end{proof}
\note{Number $a_t$ could go up to $\bigO(C^\nV)$ for constant $C$.}

\subsection{Missing proof in Theorem \ref{theorem:MSRp}}
\label{section:theorem:MSRp}

\begin{lemma}
\label{lemma:AB}
$\mathcal{A} \cap \mathcal{B}$ is a $\nm$-matroid.
\end{lemma}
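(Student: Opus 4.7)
The plan is to observe that the partition-matroid constraint $\mathcal{B}$ can be absorbed into each of the $\nm$ constituent matroids of $\M$, rather than being counted as a separate matroid. Write $\M = \M_1 \cap \cdots \cap \M_\nm$, and for each $j \in [\nm]$ define
\[
    \mathcal{A}_j' = \{ S' \subseteq \V' : |S' \cap X_v| \le 1 \text{ for all } v \in \V, \; \V(S') \in \M_j \}.
\]
Since $\V(S') \in \M$ iff $\V(S') \in \M_j$ for every $j$, we have $\mathcal{A} \cap \mathcal{B} = \bigcap_{j=1}^{\nm} \mathcal{A}_j'$. So it suffices to prove that each $\mathcal{A}_j'$ is itself a matroid on $\V'$.

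For the matroid axioms of $\mathcal{A}_j'$, downward closure is immediate: if $S' \subseteq T' \in \mathcal{A}_j'$, then $|S' \cap X_v| \le |T' \cap X_v| \le 1$, and $\V(S') \subseteq \V(T') \in \M_j$ gives $\V(S') \in \M_j$ by downward closure of $\M_j$. The key step is the augmentation axiom. Take $S', T' \in \mathcal{A}_j'$ with $|S'| < |T'|$. The constraint $|\cdot \cap X_v| \le 1$ makes the projection $(v,t) \mapsto v$ injective on both $S'$ and $T'$, so $|\V(S')| = |S'| < |T'| = |\V(T')|$. Applying the exchange property of $\M_j$, there exists $v \in \V(T') \setminus \V(S')$ with $\V(S') + v \in \M_j$. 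Pick the (unique) $t$ such that $(v,t) \in T'$ and consider $S' + (v,t)$: the pair is not in $S'$ because $v \notin \V(S')$; no $X_u$ constraint is violated since $|S' \cap X_v| = 0$ before the addition; and $\V(S' + (v,t)) = \V(S') + v \in \M_j$.

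Hence each $\mathcal{A}_j'$ is a matroid, and their intersection $\mathcal{A} \cap \mathcal{B}$ is an $\nm$-matroid, as claimed.

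The main obstacle is the augmentation step, where one must resist the temptation to treat $\mathcal{B}$ as a separate matroid (which would yield only an $(\nm+1)$-matroid). The trick is exactly the injectivity of the projection on $\mathcal{B}$-feasible sets, which translates a size comparison in $\V'$ into a size comparison in $\V$ so that the exchange property of $\M_j$ can be applied and the exchanged element can be lifted back to $\V'$ without violating $|\cdot \cap X_v| \le 1$.
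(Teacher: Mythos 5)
Your proof is correct and follows essentially the same route as the paper's: decompose $\M$ into its $\nm$ constituent matroids, absorb the constraint $\mathcal{B}$ into each one, and verify the augmentation axiom by using the injectivity of the projection $(v,t)\mapsto v$ on $\mathcal{B}$-feasible sets to transfer the exchange property of $\M_j$ from $\V$ back to $\V'$. No differences worth noting.
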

\begin{proof}
Since $\M$ is a $\nm$-matroid over \V, then we can write
$\M = \M_1 \cap \cdots \cap \M_{\nm}$ where $\M_i$ a matroid.
For each matroid $\M_i$, we construct another system over $\V'$,
\begin{align*}
	&\M_i' = \{ S' \subseteq \V': \V(S') \in \M_i \} \cap \mathcal{B} \\
	&= \{ S' \subseteq \V': \V(S') \in \M_i, | S' \cap X_v | \le 1, \text{ for all } v  \in \V \}.
\end{align*}
Notice that $\mathcal{A} = \bigcap_i \{ S' \subseteq \V': \V(S') \in \M_i \}$, and 
$\mathcal{A} \cap \mathcal{B} = \bigcap_i \M_i'$.
We prove that $\mathcal{A} \cap \mathcal{B}$ is a \nm-matroid by verifying that each $\M_i'$ is a matroid.

To show that $\M_i'$ is a matroid,
downward closeness is obvious, and we only verify augmentation.
Given any $T, U \in \M_i'$ such that $|T| < |U|$,
we have that $|\V(T)| = |T| \le |\V(U)| = |U|$.
Since $\V(T), \V(U) \in \M_i$, there exists $v \in \V(U)$ such that $v + \V(T) \in \M_i$.
Suppose $(v,t) \in U$ for that particular $v$, and it is obvious that $(v,t) + T \in \M_i'$.
Hence, $\M_i'$ is a matroid, implying that $\mathcal{A} \cap \mathcal{B} = \bigcap_i \M_i'$ is a \nm-matroid.
\hfill\end{proof}

\subsection{Synonyms to ``Trump'' in Twitter}
\label{section:synonyms}

trump
banks
warren
clinton
gates

newman
buffett
founder
reagan
carson

ceo
appoints
butcher
duffy
carlson

lowe
travis
costello
joins
airbnb

company
tesla
sanford
krause
dunlap

cassidy
does
shipbuilding
shooter
hired

rwanda
asml
hartman
barb
grandfather

rig
exchanging
lowes
varela
lamontagne

\subsection{Running time}
\label{section:runtime}

\begin{figure}[h]
    \centering
	\includegraphics[width=.5\textwidth]{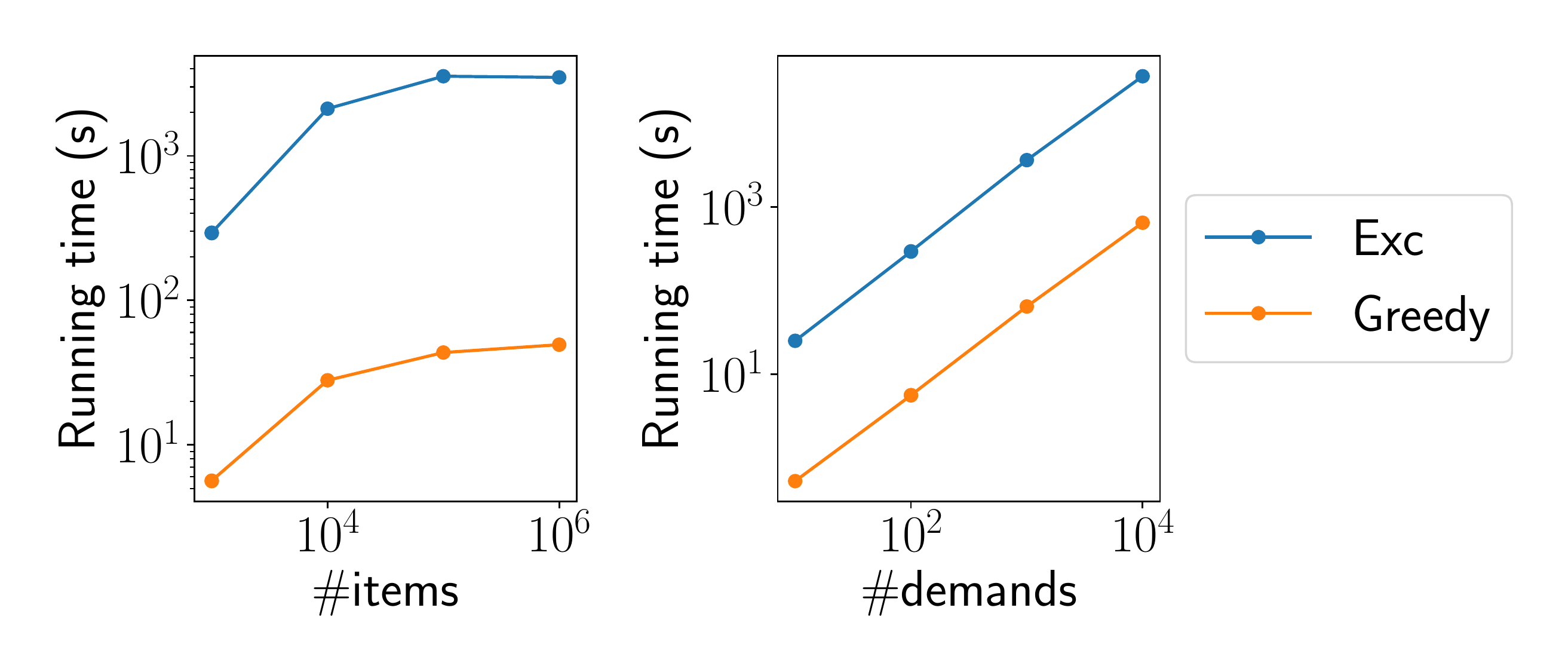}
    \caption{Running time}
    \label{fig:runtime}
\end{figure}

To examine the running time of the proposed algorithms, we generate synthetic data with an increasing number of items or demands.
More specific, we either fix the number of demands (100) while increase the number of items, 
or fix the number of items (1000) while increase the number of demands.
Each demand is represented by a coverage function of a random subset of size 100, and is assigned a random budget between 1 to 100.
For the \msrf model (\exc), a random arrival time is given to each demand.

The results are displayed in Figure~\ref{fig:runtime}.
Running time of both \exc and \greedy algorithms increases linearly in the number of demands.
As to an increasing number of items, running time of both appears to be sub-linear instead of linear, which is due to the fact that many items fail to hit any demand subsets.

\subsection{Additional experimental details}
\label{section:experiment-details}

All experiments were carried out on a server equipped with 24 processors of AMD Opteron(tm)
Processor 6172 (2.1 GHz), 62GB RAM, running Linux~2.6.32-754.35.1.el6.x86\_64.
We use Python~3.8.5.

\fi

\end{document}
